\newcommand{\mbc}{{\mathbb C}}
\newcommand{\clf}{{\mathbb C}{\rm l}}
\newcommand{\mbp}{{\mathbb P}}
\newcommand{\mbz}{{\mathbb Z}}
\newcommand{\ra}{{\rangle}}
     \def\section{\@startsection{section}{1}%
     \z@{.7\linespacing\@plus\linespacing}{.5\linespacing}%
     {\bfseries
     \centering
     }}
     \def\@secnumfont{\bfseries}
\newtheorem{theorem}{Theorem}[section]
\newtheorem{prop}[theorem]{Proposition}
\theoremstyle{definition}
\theoremstyle{remark}
\numberwithin{equation}{section}
\begin{document}

\title[Finite Geometries with Qubit  Operators]{Finite Geometries with Qubit  Operators}

\date{3 April, 2009} 

  \author{Ambar N. Sengupta}

\address{Ambar N. Sengupta, Department of Mathematics, Louisiana State University, Baton Rouge, LA 70803, USA}
 \email{sengupta@math.lsu.edu}
\urladdr{\url{http://www.math.lsu.edu/~sengupta}}


\keywords{Projective Geometry, Qubits, Quantum Computing}

\thanks{Research supported by US National Science Foundation Grant DMS-0601141}

\begin{abstract} Finite projective geometries, especially the Fano plane, have been observed to arise in the
context of certain quantum gate
operators. We use Clifford algebras to explain why these geometries, both planar and higher dimensional, 
 appear in the context of multi-qubit composite systems.
\end{abstract}

\maketitle

\section{Introduction}\label{S:Intro}

Finite projective geometries have appeared in several investigations relating to quantum computing; 
for example, Levay et al. \cite{LSV},   Rau \cite{Rau}, and Saniga et al. \cite{San}.
In this paper we present a mathematical  explanation  for the appearance of these geometries. We show that
finite geometries arise from  units in Clifford algebras which form a vector space over the field $\mbz_2$ of two elements.

As noted before there are several works where finite geometries have been observed to arise in the context of
quantum computing. The online resource of Saniga \cite{SanWeb} lists many such works. Intriguing relations
between finite geometries and the entropy of black hole solutions in certain supersymmetric field theories are explored in
L\'evay et al. \cite{LSV} (and other works cited therein).
Havlicek \cite{Hav}
has also discussed a mathematical explanation for the appearance of these geometries, but does not use Clifford algebras. Our work
is closest to and inspired by the work of Rau \cite{Rau}.

\section{Appearance of Finite Geometries}\label{S:FinG}
 
First let us briefly recall that a {\em projective geometry} is specified by a set ${\mathcal P}$ of {\em points}, a set ${\mathcal L}$ of {\em lines},
an incidence relation ${\mathcal I}\subset {\mathcal P}\times {\mathcal L}$ (for which we say that a point $A$ lies on a line $l$, or that $l$
passes through $A$, if
$(A,l)\in {\mathcal I}$) such that the following hold: (i) for every pair of distinct points  $A,B\in {\mathcal P}$ there is a unique line
denoted  $AB$  which passes through  $A$  and  $B$; (ii) if $A,B,C,D$ are   points such that the lines $AB$ and $CD$ have a point in common, then the lines 
$AC$ and $BD$ also have a point in common;
(iii) every line passes through at least three points.   A set of points  are {\em collinear} if they lie on a  common line; a set of lines are {\em coincident} if they pass through a common point.
A {\em triangle} is simply a set of three distinct points.

Under additional geometric hypotheses, the  axioms (i)-(iii) can be used to construct a field $k$
and a vector space $V$ over $k$ such that points correspond to one-dimensional subspaces of $V$, lines to two-dimensional subspaces of $V$, and
incidence corresponds to the subspace relation.  For any vector space $V$, this specifies a projective geometry ${\mathbb P}(V)$, which has a special
feature called the {\em Desargues property}:  if  $ABC$ and $A'B'C'$ are triangles  such that there is a point $D$ which, for each $X\in\{A,B,C\}$, is collinear with
$X$ and $X'$, then there is a line $l$ which, for every pair of distinct points $X,Y\in\{A,B,C\}$, is coincident with the lines $XY$ and $X'Y'$. This is illustrated in Figure \ref{FigDes}.

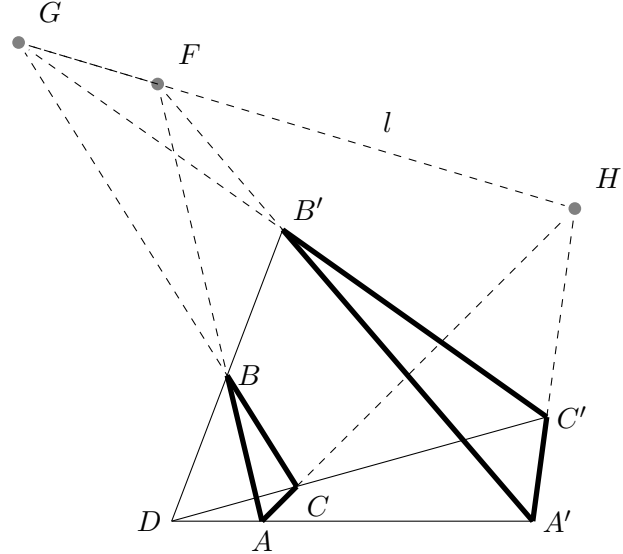
\begin{figure}

\begin{tikzpicture}[scale=1.2]

\coordinate [label=left: {$D$}] (V) at (0,0,0);

\coordinate [label=below: {$A$}] (A) at (1,0,0);
\coordinate [label=right: {$B$ }] (B) at (1,2,1);
\coordinate [label=-45: {$C$ }] (C) at (1,0,-1);

\coordinate [label=right: {$A'$}] (A') at (4,0,0);
\coordinate [label=45: {$B'$ }] (B') at (2,4,2);
\coordinate [label=right: {$C'$ }] (C') at (3,0,-3);

\coordinate [label=right: {$l$}] (l) at (3,5.2,2);

\draw (A)--(V);
\draw (B)--(V);
\draw (C)--(V);

\draw (A)--(A');
\draw (B)--(B');
\draw (C)--(C');

\draw [line width =2pt] (A)--(B);
\draw [line width =2pt] (B)--(C);
\draw [line width =2pt] (C)--(A);

\draw [line width =2pt] (A')--(B');
\draw [line width =2pt] (B')--(C');
\draw [line width =2pt]  (C')--(A');

\node [label= 45:$F$]   (F) at (intersection of A--B and A'--B') {};

\draw [dashed] (A)--(F); \draw [dashed] (A')--(F);

\node [label= 45:$G$]   (G) at (intersection of B--C and B'--C') {};

\draw [dashed] (C)--(G); \draw [dashed] (C')--(G);

 \node [label= 45:$H$]   (H) at (intersection of A--C and A'--C') {};

 \draw [dashed] (C)--(H); \draw [dashed] (C')--(H);

 \draw [dashed] (F)--(G); \draw [dashed] (G)--(H);
 \foreach \point in {F,G,H}
\fill [black,opacity=.5] (\point) circle (2pt);
  \end{tikzpicture}
\caption{The Desargues Configuration}
\label{FigDes}
\end{figure}

Returning to ideas in quantum computing, most basic quantum gates are constructed from two-state systems, i.e. those with two-dimensional Hilbert spaces. The composite of such systems is
described by tensor products of the corresponding Hilbert spaces. On choosing a fixed orthonormal basis $|0\ra$ and $|1\ra$ in the two-dimensional
Hilbert space, we can identify this space with $\mbc^2$. Some of the basic quantum gates are expressed using tensor products of
the {\em Pauli matrices}
\begin{equation} \label{Paulim}
\sigma_1=\sigma_x=\left[\begin{matrix} 0&1\\1& 0\end{matrix}\right],\qquad \sigma_2=\sigma_y=\left[\begin{matrix} 0& -i\\i& 0\end{matrix}\right],\qquad\sigma_3=\sigma_z=\left[\begin{matrix} 1&0\\0& -1\end{matrix}\right]
\end{equation}
 along with the identity matrix  $\sigma_0=I$. 
The Pauli matrices have zero trace, are hermitian as well as unitary, and form a basis of the three-dimensional real vector space of
all zero-trace $2\times 2$ hermitian matrices; along with $\sigma_0$, their real-linear span is the space of all hermitian matrices, and their
complex linear span is the space of all $2\times 2$ complex matrices. 

Let $N$ be a fixed positive integer; we will work with gates which process $N$ qubits, i.e. systems with state Hilbert space being
$(\mbc^2)^{\otimes N}$. Let $S$ be the set of all   products of the  $N$-fold tensor products of the matrices $\sigma_{\alpha}$ with $\alpha\in\{0,1,2,3\}$.

Consider now a geometry constructed as follows. Points are the elements of $S$, other than just the identity $I$ and also $-I$,  with $x$ and $-x$ identified. Lines contain three points, with
 points $a,b,c$ being on a line if $ab$ equals $\pm c$.   Rau \cite{Rau} shows (among other results) that
 for $N\in\{1,2\}$, this results in a projective space, and, furthermore,   the Desargues property holds in this space.
 A variety of other such observations have been made by Levay et al. \cite{LSV}.
 
 In particular, consider the special case $N=1$. In this case, there are seven points, three corresponding to the Pauli matrices $\sigma_j$,
 the three pairwise products $\sigma_j\sigma_k$, with $j\neq k$, and the triple product $\sigma_1\sigma_2\sigma_3$. Thus, one can view this projective space
 in terms of a triangle with the Pauli matrices as vertices, with the point corresponding to  $\sigma_j\sigma_k$ (with $j\neq k$) lying
 on the line joining the vertices for $\sigma_j$ and $\sigma_k$, and a centroid-style point in the interior of the triangle, corresponding to
 $\sigma_1\sigma_2\sigma_3$. This space is called the {\em Fano plane} (see Figure  \ref{FigFano}).  As Rau \cite{Rau} points out, it also appears as a subspace of the projective space associated to
 the $N=2$ system. The appearance of the Fano plane in quantum computing related contexts has been noted
 in many works, including Rau \cite{Rau}, Levay \cite{LSV} (and see the many works listed at   \cite{SanWeb}).

\section{Finite Geometries from Clifford Algebras}\label{S:FinClif}

In this section we shall explain why the finite projective geometries, satisfying the Desargues property, described in the preceding section arise.  

The usual Clifford algebra $\clf_m$ over $m$ generators $e_1,...,e_m$ is the associative complex algebra with identity generated by these elements, along with
an identity element $I$, subject to the relations
$$e_je_k+e_ke_j=2\delta_{jk}I\qquad\hbox{for all $j,k\in\{1,...,m\}$.}$$
Note that the square of each $e_j$ is $I$.  This algebra (for a construction see Artin \cite{Artin}) has dimension $2^m$, and a basis is formed by the products 
$$e_S=e_{s_1}...e_{s_r},$$
for all subsets $S=\{s_1,...,s_r\}$, with $s_1<s_2<\cdots<s_r$, of $\{1,...,m\}$; the element $I=e_{\emptyset}$ is the multiplicative identity.  The product
is given by
$$e_Se_T=\epsilon_{S,T}e_{S\Delta T},$$
where $S\Delta T$ is the set of elements in $S\cup T$ which are not in $S\cap T$, and $\epsilon_{S,T}$ is the product of all $\epsilon_{st}$, with $s\in S$ and
$t\in T$, where $\epsilon_{st}=-$ if $s>t$ and $\epsilon_{st}=1$ if $s\leq t$.

The Clifford algebra is a superalgebra, splitting into a sum of even and odd elements:
$$\clf_m={\clf}_m^{0}\oplus {\clf}_m^{1},$$
where ${\clf}_m^{0}$ is spanned by products of even numbers of the elements $e_j$, and ${\clf}_m^{1}$ spanned by products of
odd numbers of elements.  It is important to note that here we are working with the Clifford algebra, not a representation of it; for example,
with $m=3$, if we consider the representation of $\clf_3$ using Pauli matrices, $\sigma_1\sigma_2$ is equal to $i\sigma_3$ and the splitting
into even and odd elements is not meaningful.

It is a standard and readily verified fact that
\begin{equation}\label{cliftens}
{\clf}_{n+m}\simeq {\clf}_n\otimes {\clf}_m,\end{equation}
where the tensor product of algebras is in the `super' sense, i.e. with multiplication specified through
$$(x\otimes y)(z\otimes w)=(-1)^{pq}(xz)\otimes (yw),$$
where $y\in {\clf}_m^p$  and $z\in {\clf}_m^q$, and $p,q\in\{0,1\}$.

When $m$ is odd, the Clifford algebra ${\clf}_m$ has exactly two distinct irreducible representations, each of dimension $2^{(m-1)/2}$, and the
image of the (complex) Clifford algebra, in each case, is the entire matrix algebra in the representation space.
For example, for $m=3$, the Pauli matrices give rise to one such representation, with the element $e_j\in {\clf}_3$ represented by the
matrix $\sigma_j$, for $j\in\{1,2,3\}$.

As noted before, many of the basic quantum gates are described by means of tensor products of the matrices $\sigma_{\alpha}$. Since $\otimes^n{\clf}_3$
is isomorphic to ${\clf}_{3n}$, we may as well view the quantum gate operators as elements of  a Clifford algebra  ${\clf}_m$, represented
on some finite-dimensional Hilbert space. 

Now consider the set $C_m$ consisting of all elements in ${\clf}_m$ of the form $\pm e_S$, for $S\subset\{1,...,m\}$. Observe that $C_m$ is a group under multiplication (often called the Clifford group),
and, furthermore, the square of each element of $C_m$ is $\pm I$. Thus, in the quotient group 
$$V_m=C_m/\{I,-I\}$$
 the square of every element is the identity
and so, in particular, $V_m$ is abelian. It will be convenient to write the group operation in the abelian group $C_m$ {\em additively}, even if it is notationally
somewhat counter-intuitive. Then  we would write the identity element as $0$. Because the square of each element is the identity, we have
$$1.x+1.x=x+x=0\qquad\hbox{for all $x\in V_m$.}$$
This ensures that $V_m$ is a vector space over the two-element field
$$\mbz_2=\mbz/2\mbz=\{0,1\}.$$
Let us now consider the {\em projective space} $\mbp(V_m)$ for $V_m$. The {\em points} of $\mbp(V_m)$ are the one-dimensional subspaces of $V_m$. The {\em lines} in $\mbp(V_m)$
are two-dimensional subspaces of $C_m$. 

For any $\pm e_S\in C_m$, with $S$ a  {\em non-empty} subset of $\{1,...,m\}$,  we then have a point in $\mbp(V_m)$; and, conversely,
every point of $\mbp(V_m)$ is a subspace of the form $\{0,\pm e_S\}\subset V_m$, for some non-empty $S\subset\{1,...,m\}$. A line in $\mbp(V_m)$, being a two-dimensional
subspace of $V$, is spanned by two vectors in $V_m$, i.e. it is a subset of $V_m$ of the form $\{0,\pm e_S, \pm e_T, \pm e_{S\Delta T}\}$, for two distinct
non-empty subsets $S$ and $T$ of $\{1,...,m\}$.  Thus, a line in the projective space $\mbp(V_m)$ is determined by two elements $e_S$ and $e_T$ and contains
also a third element corresponding to the product $e_Se_T$.

Thus, we see that the geometry formed by taking as points the elements $\pm e_S$ (one point for the pair $e_S, -e_S$), and taking three points $a,b,c$ to be collinear if the product of two is the third,
is precisely the geometry of the projective space $\mbp(V_m)$. 

For $m=3$ this yields the Fano plane which has seven points and seven lines. This is illustrated in   Figure \ref{FigFano}, wherein the circular path inside the triangle is one of the seven lines.

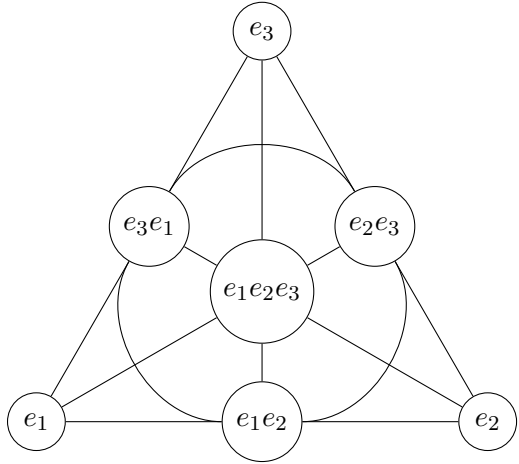
\begin{figure}

\begin{tikzpicture}[scale=3] 
\coordinate [label=left: {}] (A) at (-1,0);
\coordinate [label=right: { }] (B) at (1,0);
\coordinate [label=right: { }] (C) at (0,1.73205);

\coordinate [label=right: { }] (a) at (0.5,0.866);
\coordinate [label=left: { }] (b) at (-0.5,0.866);
\coordinate [label=below: { }] (c) at (0,0);
\coordinate [label=below: { }] (D) at (intersection of C--c and A--a);

\path 
(A)  node (nodeA) [shape=circle,draw] {$e_1$}

(B) node  (nodeB) [shape=circle,draw] {$e_2$}

(C) node (nodeC) [shape=circle,draw] {$e_3$}

(c) node (nodec)  [shape=circle,draw] {$e_1e_2$}

(a) node (nodea) [shape=circle,draw] {$e_2e_3$}

 (b) node (nodeb) [shape=circle,draw] {$e_3e_1$}
 
 (D) node (nodeD) [shape=circle,draw] {$e_1e_2e_3$}

;
\draw [] (nodeA) -- (nodec); 
\draw[ ] (nodec) -- (nodeB);
\draw[ ] (nodeA) -- (nodeb);
\draw[ ] (nodeb) -- (nodeC);
\draw[ ] (nodeC) -- (nodea);
\draw[ ](nodea) -- (nodeB);
\draw[ ] (nodeA) -- (nodeD);
\draw[ ] (nodeB) -- (nodeD);
\draw[ ] (nodeC) -- (nodeD);
\draw[ ] (nodea) -- (nodeD);
\draw[ ] (nodeb) -- (nodeD);
\draw[ ] (nodec) -- (nodeD);

\draw [] (nodea) to [out=120,in=60] (nodeb);

\draw [] (nodeb) to [out=240,in=180] (nodec);

\draw [] (nodec) to [out=0,in=-60] (nodea);
 \end{tikzpicture}
\caption{The Fano Plane}
\label{FigFano}
\end{figure}

It is a general fact  that the projective space $\mbp(V_m)$ (of any vector space $V_m$) satisfies the Desargues property. This explains the observations
made by Rau \cite[Figure 3]{Rau}. An illustration, in terms of Pauli matrices, is in Figure \ref{FigDesPauli}, where we work with the Clifford algebra
${\clf}_6\simeq{\clf}_3\otimes{\clf}_3$ (super-tensor product) and the vertices are labeled up to sign, and we have used the Pauli matrices for a specific representation of
  ${\clf}_3$.

\begin{figure}

\begin{tikzpicture}[scale=1.2]

\coordinate [label=left: {$\sigma_1\sigma_2\otimes I$}] (V) at (0,0,0);

\coordinate [label=below: {$\sigma_1\otimes\sigma_1$}] (A) at (1,0,0);
\coordinate [label=right: {$\sigma_1\otimes\sigma_2$ }] (B) at (1,2,1);
\coordinate [label=-45: {$\sigma_1\otimes\sigma_3$ }] (C) at (1,0,-1);

\coordinate [label=right: {$\sigma_2\otimes\sigma_1$}] (A') at (4,0,0);
\coordinate [label=45: {$\sigma_2\otimes\sigma_2$ }] (B') at (2,4,2);
\coordinate [label=right: {$\sigma_2\otimes\sigma_3$ }] (C') at (3,0,-3);

\draw (A)--(V);
\draw (B)--(V);
\draw (C)--(V);

\draw (A)--(A');
\draw (B)--(B');
\draw (C)--(C');

\draw [line width =2pt] (A)--(B);
\draw [line width =2pt] (B)--(C);
\draw [line width =2pt] (C)--(A);

\draw [line width =2pt] (A')--(B');
\draw [line width =2pt] (B')--(C');
\draw [line width =2pt]  (C')--(A');

\node [label= 45:$I\otimes\sigma_2\sigma_1$]   (F) at (intersection of A--B and A'--B') {};

\draw [dashed] (A)--(F); \draw [dashed] (A')--(F);

\node [label= 45:$I\otimes\sigma_3\sigma_2$]   (G) at (intersection of B--C and B'--C') {};

\draw [dashed] (C)--(G); \draw [dashed] (C')--(G);

 \node [label= 45:$I\otimes\sigma_3\sigma_1$]   (H) at (intersection of A--C and A'--C') {};

 \draw [dashed] (C)--(H); \draw [dashed] (C')--(H);

 \draw [dashed] (F)--(G); \draw [dashed] (G)--(H);
 \foreach \point in {F,G,H}
\fill [black,opacity=.5] (\point) circle (2pt);
  \end{tikzpicture}
\caption{The Desargues Configuration for ${\clf}_3\otimes{\clf}_3$}
\label{FigDesPauli}
\end{figure}
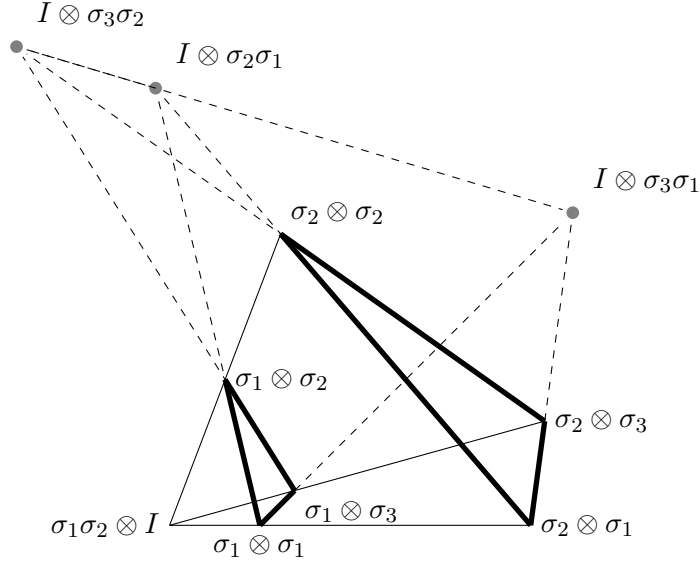

We can now state our results formally:
\begin{prop}\label{P:prjvec} Let $\mbp_m$ be the set of all $e_S\in\clf_m$ with non-empty $S\subset\{1,...,m\}$, and let ${\mathbb L}_m$
be the set of all triples of distinct elements $a,b,c\in \mbp_m$ such that the product of two of them is $\pm$ the third. The finite geometry $(\mbp_m,{\mathbb L}_m)$
 for which
 $\mbp_m$ is the set of
points  and ${\mathbb L}_m$   the set of lines is a projective geometry isomorphic to the  geometry of the projective space $\mbp(\mbz_2^{m})$.

Moreover, if $X$ is the set of points in a projective subspace of $(\mbp_m,{\mathbb L}_m)$ then the elements $e_S$ which are in $X$ span
a Lie algebra under the bracket commutator.
\end{prop}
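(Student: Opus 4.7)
The plan is to write down the isomorphism $(\mbp_m, {\mathbb L}_m) \cong \mbp(V_m)$ indicated in the preceding discussion, and then to deduce the Lie-algebra claim from a direct computation of $[e_S, e_T]$.

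For the first part, I would define $\phi : \mbp_m \to \mbp(V_m)$ by $\phi(e_S) = [e_S]$, where $[\cdot]$ denotes the class in $V_m = C_m/\{I,-I\}$. Since the classes $[e_1], \dots, [e_m]$ form a $\mbz_2$-basis of $V_m \cong \mbz_2^m$, the $2^m - 1$ classes $[e_S]$ with $\emptyset \neq S \subseteq \{1, \dots, m\}$ are exactly the nonzero vectors of $V_m$, i.e.\ the points of $\mbp(V_m)$; hence $\phi$ is a bijection on points. A triple $\{a,b,c\} \in {\mathbb L}_m$ consists of $e_S, e_T, e_U$ with $e_S e_T = \pm e_U$, and the identity $e_S e_T = \epsilon_{S,T} e_{S \Delta T}$ forces $U = S \Delta T$, equivalently $[e_U] = [e_S] + [e_T]$ in $V_m$. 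Hence $\{\phi(a), \phi(b), \phi(c)\}$ is precisely the set of nonzero vectors in the two-dimensional $\mbz_2$-subspace spanned by $[e_S]$ and $[e_T]$, which is a line of $\mbp(V_m)$. This gives the required geometric isomorphism, and in particular $(\mbp_m, {\mathbb L}_m)$ inherits axioms (i)--(iii) as well as the Desargues property from $\mbp(V_m)$.

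For the second part, let $X \subseteq \mbp_m$ be the point set of a projective subspace and let $L \subseteq \clf_m$ be the complex linear span of $X$. For $e_S, e_T \in X$ a direct computation gives
\[
[e_S, e_T] = e_S e_T - e_T e_S = (\epsilon_{S,T} - \epsilon_{T,S})\, e_{S \Delta T};
\]
this vanishes when $\epsilon_{S,T} = \epsilon_{T,S}$, and otherwise equals $\pm 2\, e_{S \Delta T}$. If $e_S \neq e_T$, then under $\phi$ the element $[e_{S \Delta T}] = [e_S] + [e_T]$ is the third point on the line through $[e_S]$ and $[e_T]$, and, $X$ being a projective subspace, this line is contained in $X$; hence $e_{S \Delta T} \in X \subseteq L$. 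If $e_S = e_T$ the bracket vanishes. Thus $L$ is closed under the commutator, and as a subspace of the associative algebra $\clf_m$ on which the commutator automatically satisfies the Jacobi identity, it is a Lie subalgebra.

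I anticipate no serious obstacle: the whole argument is forced by the relation $e_S e_T = \pm e_{S \Delta T}$, which turns multiplication in $C_m$ into vector addition in $V_m$. The only genuine bookkeeping is the sign comparison $\epsilon_{S,T}$ versus $\epsilon_{T,S}$, but since both possibilities land the commutator in $\mbc \cdot e_{S \Delta T} \subseteq L$, neither case poses a real difficulty.
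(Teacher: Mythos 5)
Your proposal is correct and follows essentially the same route as the paper: the first part is exactly the identification of $(\mbp_m,{\mathbb L}_m)$ with $\mbp(V_m)$ via $e_S\mapsto [e_S]$ that the paper establishes in the discussion preceding the proposition, and the second part is the same commutator computation, your form $[e_S,e_T]=(\epsilon_{S,T}-\epsilon_{T,S})e_{S\Delta T}$ being identical to the paper's $(1-\epsilon_{S,T}\epsilon_{T,S})e_Se_T$ since $e_Se_T=\epsilon_{S,T}e_{S\Delta T}$. If anything, you supply slightly more detail (the explicit bijection on points and the degenerate case $e_S=e_T$) than the published proof does.
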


\begin{proof} The first statement in the conclusion has already been proven in the preceding discussion. Now let $X$ be the set of points in a projective subspace of $(\mbp_m,{\mathbb L}_m)$. If $e_S$ and $e_T$ are in $X$ then the third point on the line through these points is $\pm e_Se_T$. Now the commutator bracket of
$e_S$ and $e_T$ in $\clf_m$ is
$$[e_S,e_T]=e_Se_T -e_Te_S=(1-\epsilon_{S,T}\epsilon_{T,S})e_Se_T.$$
Thus, the bracket is in the linear span of the elements $e_R\in X$. Therefore, the linear span of $X$ is a Lie algebra. 
\end{proof}

Rau \cite{Rau} shows that certain sets of points  give rise to Lie algebras even though they do not form
a projective subspace, but rather are Desargues configurations. It is also interesting to note, as Rau points out,  that it is possible to
decorate the lines of the Fano plane with arrows and,
with this decoration, the Fano plane encodes the multiplication of octonions. Since octonion multiplication is not associative, this
structure is not the same as that of the Clifford algebra $\clf_3$. 

After completing the first version of this paper I came across the papers by Shaw \cite{Sh} and Shaw and Jarvis \cite{ShJ} which explored in detail the
geometries   arising from irreducible representations of a certain class of Clifford algebras.  

{\em Acknowledgements}. My thanks to Bill Hoffman, Ravi Rau, and Neal Stoltzfus for useful discussions. Figures were drawn with Till Tantau's Ti{\em k}Z.

\bibliographystyle{amsplain}

\end{document}